\DeclareMathOperator{\Cartesian}{\Box} 
\DeclareMathOperator{\Direct}{\times} 
\DeclareMathOperator{\Strong}{\boxtimes} 
\DeclareMathOperator{\Lexicographic}{\circ} 
\DeclareMathOperator{\Kron}{\otimes} 
\DeclareMathOperator{\Isomorph}{\simeq} 
\DeclareMathOperator{\mreduce}{\leq_\mathrm{M}} 
\DeclareMathOperator{\treduce}{\leq_\mathrm{T}} 
\newcommand{\mat}[1]{\ensuremath{\mathbf{#1}}} 
\newcommand{\tr}[1]{\ensuremath{#1}^\intercal} 
\newcommand{\adj}[1]{\ensuremath{\mathbf{Adj(\mathnormal{#1})}}} 
\newcommand{\gi}[1]{{\bf{GI[#1]}}} 
\newcommand{\primality}[1]{{\bf{Primality[#1]}}} 
\newcommand{\compositeness}[1]{{\bf{Compositeness[#1]}}} 
\newcommand{\ansyes}{\textsc{yes}} 
\newcommand{\ansno}{\textsc{no}} 
\newcommand{\GG}{\Theta} 
\newcommand{\conn}{C} 
\newcommand{\unc}{U} 
\newcommand{\nonbip}{NB} 
\newtheorem{teo}{Theorem}[section]
\newtheorem{lemma}[teo]{Lemma}
\newtheorem{corol}[teo]{Corollary}
\newtheorem{defi}[teo]{Definition}
\newtheorem{obs}[teo]{Observation}
\newcommand{\Z}{\mathbb{Z}}
\newcommand{\N}{\mathbb{N}}
\journal{arXiv}
\begin{document}

\begin{frontmatter}

\title{Direct Product Primality Testing of Graphs is GI-hard}

\author[disi]{Luca~Calderoni\corref{cor1}}
\ead{luca.calderoni@unibo.it}

\author[disi]{Luciano~Margara}
\ead{luciano.margara@unibo.it}

\author[disi]{Moreno~Marzolla}
\ead{moreno.marzolla@unibo.it}

\cortext[cor1]{Corresponding author}
\address[disi]{Department of Computer Science and Engineering, University of Bologna, Italy}

\begin{abstract}
We investigate the computational complexity of the graph primality testing problem with respect to the direct product (also known as Kronecker, cardinal or tensor product). In~\cite{imrich1998} Imrich proves that both primality testing and a unique prime factorization can be determined in polynomial time for (finite) connected and nonbipartite graphs. The author states as an open problem how results on the direct product of nonbipartite, connected graphs extend to bipartite connected graphs and to disconnected ones. In this paper we partially answer this question by proving that the graph isomorphism problem is polynomial-time many-one reducible to the graph compositeness testing problem (the complement of the graph primality testing problem). As a consequence of this result, we prove that the graph isomorphism problem is polynomial-time Turing reducible to the primality testing problem. Our results show that connectedness plays a crucial role in determining the computational complexity of the graph primality testing problem.
\end{abstract}

\begin{keyword}
Kronecker product \sep graphs factorization \sep graphs isomorphism \sep GI complexity
\end{keyword}

\end{frontmatter}


\section{Introduction}\label{sec:intro}
Factorization is a fundamental task in mathematics and in many other disciplines 
including  computer science, physics and engineering. 
The notion of product among mathematical objects not only enables the
creation of new objects from smaller ones, but also
naturally addresses the more complex task of decomposing an object as
the product of simpler components. 
Factoring a mathematical object is therefore one of the the main methods for deeply understanding
its structure.

Integer factorization is by far
the most widely known and studied factorization problem; however,
many other types of mathematical objects 
have been extensively studied in order to understand if and how they can be factored.
Specifically, graph factorization
with respect to several notions of product has been thoroughly investigated
both from the theoretical and from the practical point of view.

In this paper we investigate the computational complexity of 
graph factorization with respect to the direct product (see Definition~\ref{def:direct-product})
which is one of the most widely studied graph product. 
Some authors refer to the direct product as the Kronecker, tensor or cardinal product. 
We will name it direct product and we will denote it by the operator~$\Direct$ 
according to the notation used 
in the recent book
by Hammack, Imrich and Klav{\v{z}}ar~\cite{hammack2011handbook}.
 
Direct product is one of the three products  (the other two being the Cartesian and Strong products) 
that satisfies the following fundamental algebraic properties 
($\simeq$ stands for "isomorphic to"):
\begin{enumerate}
\item Commutativity: $G_1 \Direct G_2 \Isomorph G_2 \Direct G_1$
\item Associativity: $G_1 \Direct\, (G_2 \Direct G_3) \Isomorph (G_1 \Direct G_2)\Direct G_3$
\item Projections from a product to its factors are weak homomorphisms
\end{enumerate}

A fourth product have been considered in the literature, namely the lexicographic product.
Lexicographic product does not satisfy properties~1 and~3.

We both consider the primality testing problem and the
factorization problem.  Informally, primality testing is a decision problem that, given a graph $G$, answers the question: ``is $G$ the product of smaller, nontrivial graphs?''. Graph factorization aims at decomposing~$G$ into the product of smaller nontrivial graphs (more formal definitions will be given in the next section).
Although factorization of general with respect to the direct product is not unique,
Imrich~\cite{imrich1998} proved that if a graph is connected and nonbipartite, then 
its factorization with respect to the direct product is unique and can be computed 
in polynomial time. In this paper we address the following question posed by Imrich at the end of his
paper.

\begin{quote}
{\em
How do results on the cardinal product of nonbipartite, connected graphs extend to bipartite connected graphs and to disconnected ones ?}
\end{quote}

We prove (Theorem~\ref{th:GIhard})  that the graph isomorphism problem
reduces to the problem of testing the compositeness of possibly unconnected, nonbipartite graphs. 
Since the reduction we use is a polynomial time many-one reduction, we show (Corollary~\ref{corol:turing-reduction}) that 
 testing the primality of a graph is $GI$-hard. In other words, we prove that  testing the primality 
 of a graph in polynomial time would provide a polynomial time algorithm for testing graph isomorphism, which
 is widely considered to be not feasible, although no formal proof exists.
It remains an open question whether testing primality of bipartite, connected 
graphs can be done in polynomial time

This paper is organized as follows. In section~\ref{sec:preliminaries} we
introduce the notation and definition of terms used in the rest of this work.
In section~\ref{sec:related} we review the relevant literature related to
the graph factorization problem. Section~\ref{sec:results} presents
the main result of this paper. Finally, conclusions and future research
directions are discussed in section~\ref{sec:conclusions}.

\section{Notation and Basic Definitions}\label{sec:preliminaries}

In this section we give basic notation and definitions that will be used throughout the paper.
An undirected graph $G = (V, E)$ is described as a finite set~$V$ of
nodes $V = \{v_1, \ldots, v_n\}$  and a finite set of
edges $E \subseteq V \times V$, where an edge $e \in E$ is an
unordered pair of nodes $e = \{u, v\}$, $u, v \in V$. Given a
graph~$G$, $V(G)$ and~$E(G)$ denote the set of nodes and edges of~$G$,
respectively.
We denote by $G_1 \cup G_2$ the disjoint union of graphs~$G_1$ and~$G_2$, i.e., 
the graph with node set $V(G_1) \cup V(G_2)$ and edge set 
$E(G_1) \cup E(G_2)$. Disjoint means that~$V(G_1)$ and~$V(G_2)$ 
satisfy $V(G_1)\cap V(G_2)=\emptyset$.

The set of edges of a graph~$G$ can be represented also as an
adjacency matrix~$\mat{M}$. If~$G$ has~$n$ nodes, $\mat{M}$ is a $n \times n$ binary matrix, such that
$\mat{M}_{ij} = 1$ if and only if  $\{v_i, v_j\}\in E$.
The adjacency matrix for undirected graphs is symmetric,
since every edge $\{v_i, v_j\}$ can also be written as $\{v_j, v_i\}$.
As a shorthand notation, we denote with $\adj{G}$ the
adjacency matrix of graph~$G$.

We use the symbol~$\Gamma$ to denote the set of finite, undirected
graphs where no self-loops are allowed. 
The symbol~$\Gamma_0$ denotes the set of finite,
undirected graphs where self-loops are allowed; a self-loop is
an edge of the form $\{v, v\}$, for some $v \in V(G)$.

Four types of graph products have been investigated in the literature:
\emph{Cartesian product}, \emph{Direct product}, \emph{Strong product}
and \emph{Lexicographic product}. In all cases, the product of two
graphs $G_1, G_2$ is a new graph~$G$ whose set of nodes is the
Cartesian product of~$V(G_1)$ and~$V(G_2)$:
\begin{align*}
V(G) &= V(G_1) \times V(G_2) = \{ \{u, v\}\ |\ u \in V(G_1) \wedge v \in V(G_2) \}
\end{align*}

The edge set~$E(G)$ is defined according to the notion of graph product
as follows.

\begin{defi}[Cartesian product]\label{def:cartesian-product}
  The Cartesian product of two graphs $G_1, G_2$ 
  is denoted
  as $G = G_1 \Cartesian G_2$, where $V(G) = V(G_1) \times V(G_2)$ and
  \begin{align*}
    E(G) &= \left\lbrace \{ \{x, y\}, \{x', y'\} \}\ |\ (x=x' \wedge \{y, y'\} \in E(G_2)) \vee (\{x, x'\} \in E(G_1) \wedge y=y') \right\rbrace
  \end{align*}
\end{defi}

\begin{defi}[Direct product]\label{def:direct-product}
  The direct product of two graphs $G_1, G_2$
  is denoted as
  $G = G_1 \Direct G_2$, where $V(G) = V(G_1) \times V(G_2)$ and
  \begin{align*}
    E(G) &= \left\lbrace \{ \{x, y\}, \{x', y'\} \}\ |\ \{x, x'\} \in E(G_1) \wedge \{y, y'\} \in E(G_2) \right\rbrace
  \end{align*}
  The direct product is also known as Kronecker or cardinal
  product.
\end{defi}

\begin{defi}[Strong product]\label{sub:strong-product}
  The strong product of two graphs $G_1, G_2$
  is denoted
  as $G = G_1 \Strong G_2$, where $V(G) = V(G_1) \times V(G_2)$ and
  \begin{align*}
    E(G) &= E(G_1 \Cartesian G_2) \cup E(G_1 \times G_2)
  \end{align*}
\end{defi}

\begin{defi}[Lexicographic product]\label{def:lexicographic-product}
  The lexicographic product of two graphs $G_1, G_2$
    is denoted as
  $G = G_1 \Lexicographic G_2$, where $V(G) = V(G_1) \times V(G_2)$ and
  \begin{align*}
    E(G) &= \left\lbrace \{ \{x, y\}, \{x', y'\} \}\ |\ \{x, x'\} \in E(G_1) \vee ( x = x' \wedge \{y, y'\} \in E(G_2) ) \right\rbrace
  \end{align*}
\end{defi}

Figure~\ref{fig:product-example} shows the Cartesian, direct, strong and lexicographic product of two graphs $G_1, G_2$. 

\begin{figure}[t]
  \centering

\subfigure[Cartesian ($G_1 \Cartesian G_2$)]{%
  \begin{tikzpicture}[scale=.8]
    \draw (1,1) grid (3,2); 
    \draw (1,0) grid (3,0); \node at (1.5,-0.5) {$G_2$};
    \draw (0,1) grid (0,2); \node at (-0.5,1.5) {$G_1$};    
    \draw (1,2) edge [out=10,in=80,distance=1cm] (1,2);
    \draw (2,2) edge [out=10,in=80,distance=1cm] (2,2);
    \draw (3,2) edge [out=10,in=80,distance=1cm] (3,2);
    \draw (2,1) edge [out=-10,in=-80,distance=1cm] (2,1);
    \foreach \posx in {1, 2, 3}
    \foreach \posy in {1, 2}
    \draw[fill=white] (\posx, \posy) circle (1mm);;
    \draw (0,2) edge [out=10,in=80,distance=1cm] (0,2);
    \foreach \posy in {1, 2}
    \draw[fill=white] (0, \posy) circle (1mm);
    \draw (2,0) edge [out=-10,in=-80,distance=1cm] (2,0);
    \foreach \posx in {1, 2, 3}
    \draw[fill=white] (\posx, 0) circle (1mm);
\end{tikzpicture}}%
\qquad  
\subfigure[Direct ($G_1 \Direct G_2$)]{%
  \begin{tikzpicture}[scale=.8]
    \foreach \posx in {1, 2} {
      \foreach \posy in {1} {
        \draw (\posx, \posy) -- (\posx+1,\posy+1);
        \draw (\posx, \posy+1) -- (\posx+1,\posy);
      }
    }
    \draw (2,2) -- (1,2);
    \draw (2,2) -- (2,1);
    \draw (2,2) -- (3,2);
    \draw (1,0) grid (3,0); \node at (1.5,-0.5) {$G_2$};
    \draw (0,1) grid (0,2); \node at (-0.5,1.5) {$G_1$};    
    \draw (2,2) edge [out=10,in=80,distance=1cm] (2,2);
    \foreach \posx in {1, 2, 3}
    \foreach \posy in {1, 2}
    \draw[fill=white] (\posx, \posy) circle (1mm);;
    \draw (0,2) edge [out=10,in=80,distance=1cm] (0,2);
    \foreach \posy in {1, 2}
    \draw[fill=white] (0, \posy) circle (1mm);
    \draw (2,0) edge [out=-10,in=-80,distance=1cm] (2,0);
    \foreach \posx in {1, 2, 3}
    \draw[fill=white] (\posx, 0) circle (1mm);
\end{tikzpicture}}\bigskip

\subfigure[Strong ($G_1 \Strong G_2$)]{%
  \begin{tikzpicture}[scale=.8]
    \draw (1,1) grid (3,2); 
    \foreach \posx in {1, 2} {
      \foreach \posy in {1} {
        \draw (\posx, \posy) -- (\posx+1,\posy+1);
        \draw (\posx, \posy+1) -- (\posx+1,\posy);
      }
    }    
    \draw (1,0) grid (3,0); \node at (1.5,-0.5) {$G_2$};
    \draw (0,1) grid (0,2); \node at (-0.5,1.5) {$G_1$};
    \draw (1,2) edge [out=10,in=80,distance=1cm] (1,2);
    \draw (2,2) edge [out=10,in=80,distance=1cm] (2,2);
    \draw (3,2) edge [out=10,in=80,distance=1cm] (3,2);
    \draw (2,1) edge [out=-10,in=-80,distance=1cm] (2,1);
    \foreach \posx in {1, 2, 3}
    \foreach \posy in {1, 2}
    \draw[fill=white] (\posx, \posy) circle (1mm);;
    \draw (0,2) edge [out=10,in=80,distance=1cm] (0,2);
    \foreach \posy in {1, 2}
    \draw[fill=white] (0, \posy) circle (1mm);
    \draw (2,0) edge [out=-10,in=-80,distance=1cm] (2,0);
    \foreach \posx in {1, 2, 3}
    \draw[fill=white] (\posx, 0) circle (1mm);
\end{tikzpicture}}%
\qquad
\subfigure[Lexicographic ($G_1 \Lexicographic G_2$)]{%
  \begin{tikzpicture}[scale=.8]
    \draw (1,1) grid (3,2); 
    \foreach \posx in {1, 2} {
      \foreach \posy in {1} {
        \draw (\posx, \posy) -- (\posx+1,\posy+1);
        \draw (\posx, \posy+1) -- (\posx+1,\posy);
      }
    }
    \draw (1,1) -- (3,2);
    \draw (1,2) -- (3,1);
    \draw (1,2) edge [out=25,in=155,distance=0.5cm] (3,2);
    \draw (1,0) grid (3,0); \node at (1.5,-0.5) {$G_2$};
    \draw (0,1) grid (0,2); \node at (-0.5,1.5) {$G_1$};
    \draw (1,2) edge [out=10,in=80,distance=1cm] (1,2);
    \draw (2,2) edge [out=10,in=80,distance=1cm] (2,2);
    \draw (3,2) edge [out=10,in=80,distance=1cm] (3,2);
    \draw (2,1) edge [out=-10,in=-80,distance=1cm] (2,1);
    \foreach \posx in {1, 2, 3}
    \foreach \posy in {1, 2}
    \draw[fill=white] (\posx, \posy) circle (1mm);;
    \draw (0,2) edge [out=10,in=80,distance=1cm] (0,2);
    \foreach \posy in {1, 2}
    \draw[fill=white] (0, \posy) circle (1mm);
    \draw (2,0) edge [out=-10,in=-80,distance=1cm] (2,0);
    \foreach \posx in {1, 2, 3}
    \draw[fill=white] (\posx, 0) circle (1mm);
\end{tikzpicture}}
\caption{Example of the different types of graph products.}\label{fig:product-example}
\end{figure}
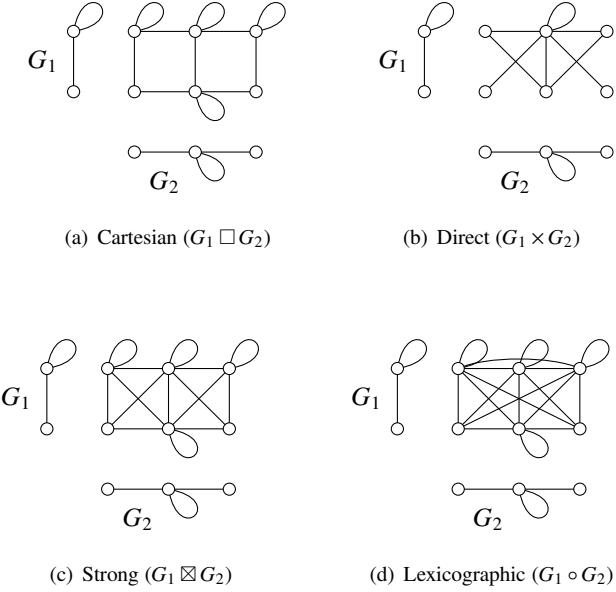

A nontrivial graph~$G \in \Gamma_0$ is a graph with more than one node
($|V(G)| > 1$). We say that a graph~$G$ is \emph{prime} according to a
given graph product $\odot$ if~$G$ is nontrivial and $G = G_1 \odot
G_2$ implies that either~$G_1$ or~$G_2$ are trivial, i.e., one of them
has exactly one node.

The direct product of $G_1, G_2$ can be specified in terms of the
\emph{Kronecker product} of their adjacency matrices. Given a 
$n \times m$ matrix~$\mat{A}$ and a $p \times q$ matrix~$\mat{B}$, the
Kronecker product $\mat{C} = \mat{A} \Kron \mat{B}$ is a $np \times
mq$ matrix obtained from the scalar multiplication between each
element of~$\mat{A}$ and the whole matrix~$\mat{B}$:

\begin{align*}
  \mat{C} &= \mat{A} \Kron \mat{B} = \begin{pmatrix}
    a_{11} \mat{B} & a_{12} \mat{B} & \ldots & a_{1m} \mat{B} \\
    a_{21} \mat{B} & a_{22} \mat{B} & \ldots & a_{2m} \mat{B} \\
    \vdots & \vdots & \ddots & \vdots \\
    a_{n1} \mat{B} & a_{n2} \mat{B} & \ldots & a_{nm} \mat{B} \\
    \end{pmatrix}
\end{align*}

It can be easily shown~\cite{hammack2011handbook} that the adjacency
matrix~$\adj{G}$ of the graph $G = G_1 \Direct G_2$ is strongly related 
 (see Observation~\ref{obs:kron-mat-kron-graph}) to the
adjacency matrices~$\adj{G_1}$ and~$\adj{G_2}$.

We finally recall the definition of \emph{many-one reducibility} and
\emph{Turing reducibility}. Given two sets $S_1, S_2 \subseteq \N$, we
say that~$S_1$ is many-one reducible to~$S_2$, if
there exists a total computable function $f: \N \rightarrow \N$ such
that~\cite{post1944}

\begin{align*}
n \in S_1 \iff f(n) \in S_2
\end{align*}
A polynomial time many-one reduction (denoted by $\mreduce$) is a many-one reduction 
with the additional constraint that~$f$ is computable in polynomial time.

Turing reducibility is a weaker form of many-one
reducibility. Informally, $S_1$ is Turing reducible to~$S_2$ if there exists an oracle for testing membership in~$S_1$ relying on another oracle for testing membership in
$S_2$~\cite{hartley1967}. In other words, $S_1$ is Turing reducible
to~$S_2$ if it is possible to answer the question ``is $n \in S_1$''
given the existence of an effective procedure for answering the
question ``is $m \in S_2$'' for any $m \in \N$~\cite{post1944}.

A polynomial time Turing reduction (denoted by $\treduce$) is a Turing reduction satisfying the following two additional constraints:
\begin{enumerate}
\item the oracle for testing membership in~$S_1$ makes at most a
  polynomial number of calls to the oracle for testing membership
  in~$S_2$ and
\item the overall computational cost of the oracle for testing
  membership in~$S_1$ (excluding the calls to the oracle for testing
  membership in~$S_2$) is polynomially bounded.
\end{enumerate}

\begin{table}[t]
  \centering%
  \begin{tabular}{ll}
    \toprule
    \textbf{Symbol} & \textbf{Description} \\
    \midrule
    $\Gamma$    & The set of finite, undirected graphs, without self-loops\\
    $\Gamma_0$  & The set of finite, undirected graphs, self-loops allowed\\
    $\adj{G}$ & The adjacency matrix of a graph $G$ \\
    $\mat{I}_n$ & The $n \times n$ identity matrix \\
    $\mat{0}_n$ & The $n \times n$ zero matrix \\
    $\Direct$   & The direct graph product operator \\
    $\Cartesian$   & The Cartesian graph product operator \\
   $\Strong$   & The strong graph product operator \\
   $\Lexicographic$   & The lexicographic graph product operator \\
    $\Kron$   & The Kronecker matrix product operator \\
     $\Isomorph$ & The graphs isomorphism operator \\
    $\mreduce$  & Polynomial many-one reducibility \\
    $\treduce$  & Polynomial Turing reducibility \\
    $\cup$  & Disjoint union of graphs \\
  \bottomrule
  \end{tabular}
  \caption{Basic notation.}\label{tab:notation}
\end{table}

As a final consideration, throughout the paper we intend graphs to be \emph{finite} and \emph{undirected}, unless otherwise specified.
Table~\ref{tab:notation} summarizes the notation used in this paper.
\section{Related works}\label{sec:related}

In this section we list the main
results on graph factorization that 
are strictly related to the work presented in this paper.
The interested reader may find a comprehensive review of the 
theory of graph factorizations in the recent book
by Hammack, Imrich and Klav{\v{z}}ar~\cite{hammack2011handbook}.

\paragraph{Direct product}
Prime factorization of connected, nonbipartite graphs in $\Gamma_0$ is unique 
up to isomorphism and the order of the factors, and
can be computed in polynomial time~\cite{imrich1998}.

\paragraph{Cartesian product}
Prime factorization of connected 
graphs is unique 
up to isomorphism and the order of the factors~\cite{Sabidussi1959/60, MR0209178}.
Prime factorization is not unique in the class of possibly disconnected simple graphs.
Following Sabidussi's approach, Feigenbaum et al.~\cite{feigenbaum1985} 
derived a polynomial-time algorithm that computes the prime factors of a connected graph.
A different polyonimial-time algorithm for connected graphs has been independently
discovered by Winkler~\cite{winkler1987}.

\paragraph{Strong product}
Prime factorization of connected graphs is unique up to 
reorderings and isomorphisms of factors and it
can be computed in polynomial time~\cite{imrich1998}.

\paragraph{Lexicographic product}
Determining whether a connected graph is prime 
is at least as difficult as the graph isomorphism problem~\cite{feigenbaum1986}.\\

\begin{table}[t]
\begin{center}
  \begin{tabular}{rcccc}
    \toprule
    {\bf Graph type} & \multicolumn{4}{c}{\bf Product type} \\
    & Direct
     & Cartesian
      & Strong
       & Lexicographic \\
    \cmidrule(l){2-5}
    Connected, nonbipartite
    & \textbf{P}~\cite{imrich1998}
     & \color{lightgray}P~\cite{feigenbaum1985,winkler1987}
      & \color{lightgray}P~\cite{feigenbaum1992}
       & $\bullet$  \\
    Connected
    & $\bullet$
     & \textbf{P}~\cite{feigenbaum1985,winkler1987}
      & \textbf{P}~\cite{feigenbaum1992}
       & \textbf{GI-Hard}~\cite{feigenbaum1986} \\
    Unconnected, nonbipartite
    & \textbf{GI-Hard} (our results)
     & $\bullet$
      & $\bullet$
       & $\bullet$ \\
    Nonbipartite
    & \color{lightgray}\textbf{GI-Hard} (our results)
     & $\bullet$
      & $\bullet$
       & $\bullet$ \\
    \bottomrule
  \end{tabular}
\end{center}
\caption{Complexity of the
graph factorization problem for  different types of graphs 
considered in the literature (connected, unconnected, nonbipartite) and different notions of
graph product (direct, cartesian, strong and lexicographic product);
 \textbf{P} stands for polynomially time solvable. Table cells reported in light gray can be easily inferred
 from another cells in the same column depending on the relation between the corresponding classes of graphs. For instance, a polynomial-time solvable problem for connected graphs is polynomial-time solvable within a restricted class of graphs (e.g., connected and nonbipartite). Dots denote the cases that, to our knowledge, have not yet been explored.
  }\label{tab:previous-results}
\label{default}
\end{table}

An interesting observation relating graph factorization and graph isomorphism problem
can be found at the end of page of~\cite[p. 229]{hammack2011handbook}.
The authors claim that, if~$X$ is the disjoint union of graphs~$G$ and~$H$,
then $G \Isomorph H$ if and only if $X = D_2 \Cartesian G = D_2 \Strong G = D_2 \Lexicographic G$
where~$D_2$ denotes the graph with two nodes and two self-loops.
They conclude that ``testing whether a disconnected graph is decomposable with respect to any 
of these three products is at least as hard as the graph isomorphism problem''. 
They do not give a formal proof of their claim and in particular they do not explain how they get rid of the 
case in which~$X$ is the disjoint union of two non isomorphic graphs~$G_1$ and~$G_2$ and, at the same time,  $X$
admits as a factor a graph with two nodes that is not isomorphic to~$D_2$.
In that case the decomposability test would lead to erroneously declare $G_1 \simeq G_2$.
Moreover, if~$X$ admits more than one factorization, also computing a single factorization could not be enough for 
testing isomorphism.

In Section~\ref{sec:results}  we show (see Figures~\ref{fig:non-unique-factorization} and~\ref{fig:whichever-factorization}) 
that when the direct product is used, both these cases
may occur.
\section{Main Results}\label{sec:results}

In this section we prove that testing whether two graphs $G_1, G_2$
are isomorphic is not harder than testing whether an undirected
graph $G \in \Gamma_0$ is $\Direct$-composite, i.e., $G$ admits
nontrivial factors with respect to the direct product decomposition. 
More formally, we show
that graph isomorphism problem is polynomially many-one
reducible to the problem of testing $\Direct$-compositeness of graphs.

Before starting, we formally define the following three decision problems in terms of their admissible inputs 
and related outputs.

\begin{defi}[\gi{$S\hspace{-0.03cm}$}] Let~$S$ be any set of graphs. \gi{$S\hspace{-0.03cm}$} is defined as follows.\\
Input: $G_1, G_2 \in S$\\
Output: {\ansyes} if~$G_1$ is isomorphic to~$G_2$, {\ansno} otherwise.
\end{defi}

\begin{defi}[\primality{$S\hspace{-0.03cm}$}]
Let~$S$ be any set of graphs. \primality{$S\hspace{-0.03cm}$} is defined as follows.\\
Input: $G \in S$\\
Output: {\ansyes} if~$G$ is prime with respect
 to the direct product, {\ansno} otherwise.
\end{defi}

\begin{defi}[\compositeness{$S\hspace{-0.03cm}$}]
Let~$S$ be any set of graphs. \compositeness{$S\hspace{-0.03cm}$} is defined as follows.\\
Input: $G \in S$\\
Output: {\ansyes} if~$G$ is composite with respect
 to the direct product, {\ansno} otherwise.
\end{defi}

\begin{defi}[$GI$-hard problem]
A decisional problem~$P$ is $GI$-hard if and only if
$$\gi{general\ graphs\hspace{-0.03cm}} \treduce P$$
\end{defi}

It is easy to observe that $\gi{general\ graphs\hspace{-0.03cm}} \treduce \gi{connected\ graphs\hspace{-0.03cm}}$
so that, by transitivity, we can conclude that a decisional problem~$P$ is $GI$-hard if and only if
$$\gi{connected\ graphs\hspace{-0.03cm}} \treduce P$$

The following observation 
highlights the  strong relation
between the direct product of graphs and the Kronecker product of their adjacency matrices.

\begin{obs}\label{obs:kron-mat-kron-graph}
Let~$G_1$ and~$G_2$ be graphs.
Then
\begin{equation*}
\adj{G_1 \Direct G_2} =   \tr{\mat{P}}\,(\adj{G_1} \Kron \adj{G_2})\,\mat{P}  ,
\end{equation*}
where $\mat{P}$ is a suitable permutation matrix.
\end{obs}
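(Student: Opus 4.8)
The plan is to verify the identity entrywise after fixing a convenient labelling of the vertices of $G_1 \Direct G_2$, and then to absorb the freedom in the choice of any other labelling into the permutation matrix $\mat{P}$. First I would fix an ordering: write $V(G_1) = \{u_1, \ldots, u_n\}$ and $V(G_2) = \{w_1, \ldots, w_m\}$, and enumerate $V(G_1 \Direct G_2) = V(G_1) \times V(G_2)$ in lexicographic order, so that the pair $(u_i, w_k)$ occupies position $(i-1)m + k$. With respect to this ordering, the entry of $\adj{G_1 \Direct G_2}$ indexed by $(u_i, w_k)$ and $(u_j, w_l)$ equals $1$, by Definition~\ref{def:direct-product}, precisely when $\{u_i, u_j\} \in E(G_1)$ and $\{w_k, w_l\} \in E(G_2)$, that is when $[\adj{G_1}]_{ij} = 1$ and $[\adj{G_2}]_{kl} = 1$; since these are $0/1$ matrices, the entry is exactly the product $[\adj{G_1}]_{ij}\,[\adj{G_2}]_{kl}$.

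Next I would compute the matching entry of the Kronecker product. By the definition of $\Kron$, the block of $\adj{G_1} \Kron \adj{G_2}$ at block-position $(i,j)$ is $[\adj{G_1}]_{ij}\,\adj{G_2}$, whose $(k,l)$ entry is again $[\adj{G_1}]_{ij}\,[\adj{G_2}]_{kl}$. Under the same lexicographic indexing, block-position $(i,j)$ together with intra-block position $(k,l)$ corresponds exactly to the global position indexed by $(u_i,w_k)$ and $(u_j,w_l)$. Hence the two matrices coincide entry by entry, and for this particular labelling we obtain the stronger identity $\adj{G_1 \Direct G_2} = \adj{G_1} \Kron \adj{G_2}$.

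Finally I would account for arbitrary labellings. Any enumeration of the $nm$ vertices of $G_1 \Direct G_2$ differs from the lexicographic one by a permutation $\sigma$ of the index set, and relabelling the vertices of a graph replaces its adjacency matrix $\mat{M}$ by $\tr{\mat{P}}\,\mat{M}\,\mat{P}$, where $\mat{P}$ is the permutation matrix associated with $\sigma$. Taking $\mat{M} = \adj{G_1} \Kron \adj{G_2}$ yields the stated identity. The only delicate point is the bookkeeping---checking that the lexicographic pairing of indices lines up with the block structure of the Kronecker product---and once that correspondence is pinned down the claim is immediate; the permutation matrix merely records the freedom in how the vertices of the product graph happen to be enumerated.
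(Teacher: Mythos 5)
Your proof is correct: the entrywise verification under the lexicographic labelling, followed by absorbing any other labelling into the permutation matrix $\mat{P}$, is exactly the standard argument. The paper states this as an Observation without proof (deferring to~\cite{hammack2011handbook}), and your write-up supplies precisely the routine verification that the paper treats as implicit, so there is nothing to contrast.
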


In the following lemma we prove that  two graphs~$G_1$ and~$G_2$
are isomorphic if and only if  there exists a permutation matrix~$\mat{P}$  that transforms
the adjacency matrix of the disjoint union of~$G_1$ and~$G_2$ into
the Kronecker product of the identity matrix~$\mat{I}_2$ and a suitable binary
matrix~$\mat{B}$.

\begin{lemma}\label{g1-iso-g2}
Let $G_1, G_2$ be undirected, connected graphs with~$n$ nodes. 
Let $\mat{M}_1=\adj{G_1}$ and $\mat{M}_2=\adj{G_2}$.
Let $\mat{M}$ denote the adjacency matrix of the disjoint union $G = G_1 \cup G_2$. 
Without loss of generality, we may write $\mat{M}$ as
  
  \begin{equation}
    \mat{M} = \begin{pmatrix}
      \mat{M}_1 & \mat{0}_n\\
      \mat{0}_n & \mat{M}_2
      \end{pmatrix}\label{eq:g1-iso-g2-M}
  \end{equation}
  
  \noindent Then, $G_1$ and~$G_2$ are isomorphic ($G_1 \Isomorph G_2$) if
  and only if there exists a $2n \times 2n$ permutation
  matrix~$\mat{P}$ and a $n \times n$ binary matrix~$\mat{B}$ such
  that

  \begin{equation}
    \tr{\mat{P}} \mat{M} \mat{P} = \mat{I}_2 \Kron \mat{B}\label{eq:g1-iso-g2-2}
  \end{equation}
  where~$I_2$ denotes the $2\times 2$ identity matrix.
\end{lemma}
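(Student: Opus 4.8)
The plan is to translate the matrix identity into a statement about graphs and then settle it by a short argument on connected components. First I would unpack the right-hand side of~\eqref{eq:g1-iso-g2-2}: since
\begin{equation*}
\mat{I}_2 \Kron \mat{B} = \begin{pmatrix} \mat{B} & \mat{0}_n \\ \mat{0}_n & \mat{B} \end{pmatrix},
\end{equation*}
it is block diagonal with two identical blocks. Because $\mat{M}$ is symmetric and a permutation similarity preserves symmetry, any $\mat{B}$ satisfying~\eqref{eq:g1-iso-g2-2} is itself symmetric, hence is the adjacency matrix of a graph~$H$ on~$n$ nodes, and $\mat{I}_2 \Kron \mat{B}$ is then the adjacency matrix of the disjoint union $H \cup H$ of two copies of~$H$. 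Since two graphs are isomorphic exactly when their adjacency matrices are conjugate by a permutation matrix, equation~\eqref{eq:g1-iso-g2-2} holds for some permutation~$\mat{P}$ and binary~$\mat{B}$ if and only if $G_1 \cup G_2 \Isomorph H \cup H$ for some graph~$H$ on~$n$ nodes. The lemma thus reduces to the claim that, for connected $G_1, G_2$, one has $G_1 \Isomorph G_2$ precisely when $G_1 \cup G_2 \Isomorph H \cup H$ for some~$H$.

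For the forward implication I would assume $G_1 \Isomorph G_2$ and choose a permutation matrix~$\mat{Q}$ with $\tr{\mat{Q}} \mat{M}_2 \mat{Q} = \mat{M}_1$. Taking $\mat{B} = \mat{M}_1$ and the block-diagonal permutation $\mat{P} = \begin{pmatrix} \mat{I}_n & \mat{0}_n \\ \mat{0}_n & \mat{Q} \end{pmatrix}$, a direct block computation gives $\tr{\mat{P}} \mat{M} \mat{P} = \begin{pmatrix} \mat{M}_1 & \mat{0}_n \\ \mat{0}_n & \mat{M}_1 \end{pmatrix} = \mat{I}_2 \Kron \mat{B}$, as required. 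This direction is routine.

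The substantive direction is the converse, and the key tool is that a graph isomorphism maps connected components bijectively onto connected components, so it preserves both the number of components and the multiset of their isomorphism types. Suppose $G_1 \cup G_2 \Isomorph H \cup H$. Since $G_1$ and~$G_2$ are connected and nonempty, the left-hand side has exactly two components, whereas $H \cup H$ has $2c(H)$ components, where $c(H)$ is the number of components of~$H$. Equating component counts forces $c(H) = 1$, i.e.\ $H$ is connected, so $H \cup H$ has exactly two components, each isomorphic to~$H$. Matching the component multisets then yields $G_1 \Isomorph H$ and $G_2 \Isomorph H$, and by transitivity $G_1 \Isomorph G_2$.

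I expect the only real obstacle to be this last direction, and more precisely the need to exclude the possibility that~$\mat{B}$ encodes a \emph{disconnected} graph~$H$ whose components might be redistributed between~$G_1$ and~$G_2$ without making them isomorphic; the component-counting argument rules this out. This is exactly where the hypothesis that $G_1$ and $G_2$ are connected is indispensable: it pins the left-hand side down to two components and thereby forces~$H$ to be connected.
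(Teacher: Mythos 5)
Your proof is correct, but it takes a genuinely different route from the paper's. The paper argues at the matrix level: in the backward direction it studies the relabeling permutation $\pi$ induced by $\mat{P}$ and uses connectivity of $G_1$ (and $G_2$) to show that no two vertices of the same factor can be sent to different diagonal blocks --- otherwise an edge on a path joining them would place a $1$ in an off-diagonal quadrant of $\tr{\mat{P}}\mat{M}\mat{P}$, contradicting the block-diagonal shape of $\mat{I}_2 \Kron \mat{B}$; it then concludes that $\mat{P}$ is block-diagonal, so that $\mat{M}_1$ and $\mat{M}_2$ are each permutation-conjugate to $\mat{B}$, and transitivity finishes. You instead re-read Eq.~\eqref{eq:g1-iso-g2-2} as the graph-theoretic statement $G_1 \cup G_2 \Isomorph H \cup H$, where $H$ is the graph with $\adj{H} = \mat{B}$, and then invoke the standard fact that an isomorphism maps connected components bijectively onto connected components, preserving their isomorphism types as a multiset: counting components forces $H$ to be connected, and matching the multisets gives $G_1 \Isomorph H \Isomorph G_2$. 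Both arguments use connectivity essentially, but yours buys a cleaner handling of one point the paper glosses over: the paper asserts that $\pi$ maps $\{1,\ldots,n\}$ and $\{n+1,\ldots,2n\}$ \emph{into themselves}, whereas its connectivity argument only shows that each factor lands entirely inside one block --- the permutation could also exchange the two blocks, giving an anti-block-diagonal $\mat{P}$ (a symmetric and harmless case, but one the paper's stated block form of $\mat{P}$ silently excludes). Your component-multiset argument sidesteps that bookkeeping entirely, at the cost of stepping outside the matrix formalism; the paper's version, in exchange, stays in the Kronecker-product language that the rest of the paper (Observation~\ref{obs:kron-mat-kron-graph}, Theorem~\ref{classG}) is built on.
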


\begin{proof}
  ($\implies$) We first prove that if~$G_1$ and~$G_2$ are isomorphic, then
  Eq.~\eqref{eq:g1-iso-g2-2} holds. If $G_1 \Isomorph G_2$ then there
  exists a $n \times n$ permutation matrix~$\mat{Q}$ that transforms the
  adjacency matrix~$\mat{M}_2$ of~$G_2$ in the adjacency matrix~$\mat{M}_1$ of~$G_1$:

  \begin{equation}
    \mat{M}_1 = \tr{\mat{Q}} \mat{M}_2 \mat{Q}\label{eq:g1-iso-g2-M1}
  \end{equation}

  \noindent Let us define

  \begin{equation}
    \mat{P} = \begin{pmatrix}
      \mat{I}_n & \mat{0}_n \\
      \mat{0}_n & \mat{Q}
    \end{pmatrix}\label{eq:g1-iso-g2-P}
  \end{equation}

  \noindent It follows that
  
  \begin{align*}
    \tr{\mat{P}} \mat{M} \mat{P} &=
    \begin{pmatrix}
      \mat{I}_n & \mat{0}_n\\
      \mat{0}_n & \tr{\mat{Q}}
    \end{pmatrix}
    \begin{pmatrix}
        \mat{M}_1 & \mat{0}_n\\
        \mat{0}_n & \mat{M}_2
    \end{pmatrix}
    \begin{pmatrix}
        \mat{I}_n & \mat{0}_n\\
        \mat{0}_n & \mat{Q}
    \end{pmatrix} & \text{by~\eqref{eq:g1-iso-g2-P} and~\eqref{eq:g1-iso-g2-M}}\\
    &= \begin{pmatrix}
      \mat{M}_1 & \mat{0}_n\\
      \mat{0}_n & \tr{\mat{Q}} \mat{M}_2 \mat{Q}
    \end{pmatrix}\\    
    &= \begin{pmatrix}
        \mat{M}_1 & \mat{0}_n\\
        \mat{0}_n & \mat{M}_1
      \end{pmatrix} & \text{by~\eqref{eq:g1-iso-g2-M1}} \\
    &= \mat{I}_2 \Kron \mat{M}_1
  \end{align*}
  
  \noindent ($\Longleftarrow$) We prove that if Eq.~\eqref{eq:g1-iso-g2-2}
  holds, then $G_1, G_2$ are isomorphic.

  \noindent Observe that the transformation $\tr{\mat{P}} \mat{M} \mat{P}$
  consists of relabeling the nodes of~$G$ according to the
  permutation matrix~$\mat{P}$. Let us define this relabeling as the
  bijective function $\pi: \{1, 2, \ldots, 2n\} \to \{1, 2, \ldots,
  2n\}$. Therefore, $\pi(i) = j$ if and only if the node~$i$ of~$G$ is relabeled
  as~$j$. Note also that $\tr{\mat{P}} \mat{M} \mat{P}$ is symmetric,
  since it represents the adjacency matrix of an undirected graph.
  
  \noindent Since we are assuming that~$G_1$ is connected, then there always
  exists a path from node~$i$ to node~$j$, $1 \leq i,j \leq n$. Thus,
  should the permutation contain a mapping such that $\pi(i) \leq n$
  and $\pi(j) > n$, the relabeled adjacency matrix $\tr{\mat{P}}
  \mat{M} \mat{P}$ would contain at least one~$1$ in the upper-right
  quadrant and (by symmetry) in the lower-left one. However, this
  contradicts the hypothesis~\eqref{eq:g1-iso-g2-2}, since the upper
  right and lower left quadrants of $\mat{I}_2 \Kron \mat{B}$ are the
  zero matrix $\mat{0}_n$. The same considerations apply to~$G_2$.

  \noindent Thus, from Eq.~\eqref{eq:g1-iso-g2-2} we observe that~$\pi$ maps the sets
  $\{1, 2, \ldots, n\}$ and $\{n+1, n+2, \ldots, 2n\}$ into
  themselves, and therefore~$\mat{P}$ must have a block structure:
  
  \begin{equation}
    \mat{P} = \begin{pmatrix}
      \mat{P}_1 & \mat{0}_n \\
      \mat{0}_n & \mat{P}_2
    \end{pmatrix}\label{eq:g1-iso-g2-P1P2}
  \end{equation}

  \noindent Let~$G_3$ be the undirected graph such that~$\adj{G_3}=\mat{B}$.
  Combining~\eqref{eq:g1-iso-g2-P1P2} and~\eqref{eq:g1-iso-g2-2} we
  can conclude that~$G_1 \Isomorph G_3$ and $G_2 \Isomorph G_3$,
  because the adjacency matrices $\mat{M}_1, \mat{M}_2$ can be
  transformed into~$\mat{B}$ via the permutation matrices $\mat{P}_1,
  \mat{P}_2$, respectively. By transitivity we conclude $G_1 \Isomorph G_2$.
\end{proof}

Lemma \ref{g1-iso-g2} ensures
that the adjacency matrix of the disjoint union of two isomorphic graphs may always be written as $\mat{I}_2 \Kron \mat{B}$; note that $\mat{I}_2$ is the adjacency matrix of $D_2$, the graph with two nodes and two self-loops. 
Unfortunately, simply testing primality of the disjoint union~$X = G_1 \cup G_2$ of two graphs~$G_1$ and~$G_2$ is not enough for 
deciding whether~$G_1$ and~$G_2$ are isomorphic or not.
In fact, as mentioned at the end of Section~\ref{sec:related}, the graph~$X$ could admit as a factor a graph with two nodes
different from~$D_2$.
For example, Figure~\ref{fig:whichever-factorization} shows a graph~$X$ such that
\begin{itemize}
\item $X$ is the disjoint union of two non isomorphic graphs (connected and having the same number 
of nodes and edges) and
\item $X$ admits as a factor a graph with two nodes
  different from~$D_2$.
\end{itemize}

Moreover, the idea of factorizing $X=G_1 \cup G_2$ to check whether~$D_2$ is a factor 
(or, equivalently, $G_1 \simeq G_2$) 
might fail due to the fact that~$X$ could admit two different factorizations~$F_1$ and~$F_2$ where~$F_1$ contains~$D_2$  while~$F_2$ does not. 
Figure~\ref{fig:non-unique-factorization} shows an example of a graph~$X$ such that
\begin{itemize}
\item $X$ is the disjoint union of two isomorphic graphs (both connected and with a prime number of nodes)
\item $X$ admits two distinct factorizations~$F_1$ and~$F_2$ where~$F_1$ contains~$D_2$ while~$F_2$ does not.
\end{itemize}

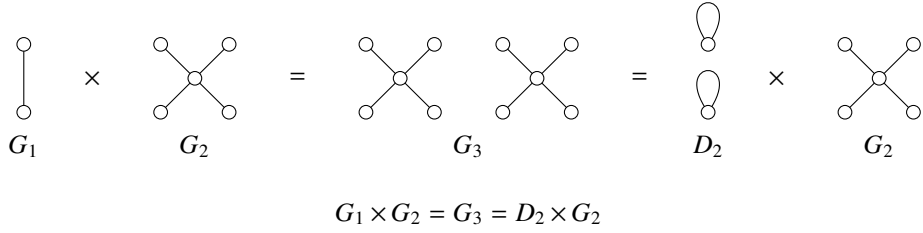
\begin{figure}[t]
\centering
\begin{tikzpicture}[scale=.9]
    \node at (0,-0.5) {$G_1$};
    \draw (0,0) grid (0,1);
    \foreach \posy in {0, 1}
    \draw[fill=white] (0, \posy) circle (1mm);
    \node at (1,0.5) {$\Direct$};
    \node at (2.5,-0.5) {$G_2$};
    \draw (2,0) -- (3,1);
    \draw (2,1) -- (3,0);    
    \foreach \posx in {2, 3}
    \foreach \posy in {0, 1}
    \draw[fill=white] (\posx, \posy) circle (1mm);;
    \draw[fill=white] (2.5, 0.5) circle (1mm);    
    \node at (4,0.5) {$=$};
    \node at (6.5,-0.5) {$G_3$};
    \draw (5,0) -- (6,1);
    \draw (5,1) -- (6,0);
    \draw (7,0) -- (8,1);
    \draw (7,1) -- (8,0);
    \foreach \posx in {5, 6, 7, 8}
    \foreach \posy in {0, 1}
    \draw[fill=white] (\posx, \posy) circle (1mm);;
    \draw[fill=white] (5.5, 0.5) circle (1mm);
    \draw[fill=white] (7.5, 0.5) circle (1mm);
	\node at (9,0.5) {$=$};    
    \node at (10,-0.5) {$D_2$};
    \draw (10,0) edge [out=55,in=125,distance=1cm] (10,0);
    \draw (10,1) edge [out=55,in=125,distance=1cm] (10,1);
    \foreach \posy in {0, 1}
    \draw[fill=white] (10, \posy) circle (1mm);
    \node at (11,0.5) {$\Direct$};
    \node at (12.5,-0.5) {$G_2$};
    \draw (12,0) -- (13,1);
    \draw (12,1) -- (13,0);    
    \foreach \posx in {12, 13}
    \foreach \posy in {0, 1}
    \draw[fill=white] (\posx, \posy) circle (1mm);;
    \draw[fill=white] (12.5, 0.5) circle (1mm);    

    \node at (6.5,-1.5) {$G_1 \Direct G_2 = G_3 = D_2 \Direct G_2$};

\end{tikzpicture}
\caption{
$G_2$ is a connected graph with a prime number of nodes. 
$G_3$ is the disjoint union of two copies of $G_2$. $G_3$ admits two different factorizations, namely, $G_1 \Direct G_2$ 
and $D_2 \Direct G_2$. }
\label{fig:non-unique-factorization}
\end{figure}

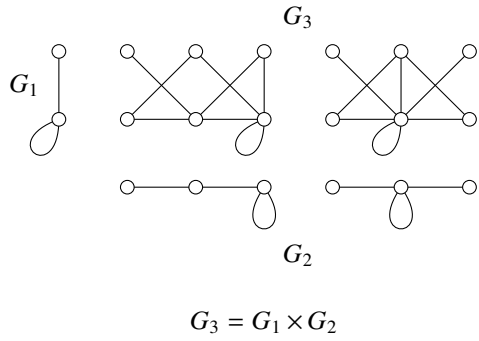
\begin{figure}[t]
\centering
\begin{tikzpicture}[scale=.9]
   \node at (3.5,2.5) {$G_3$};
    \draw (2,1) -- (1,2);
    \draw (2,1) -- (3,2);
    \draw (2,2) -- (1,1);
    \draw (2,2) -- (3,1);
    \draw (3,2) -- (3,1);
    \draw (1,1) grid (3,1);
    \draw (5,1) -- (4,2);
    \draw (5,1) -- (6,2);
    \draw (5,2) -- (4,1);
    \draw (5,2) -- (5,1);
    \draw (5,2) -- (6,1);
    \draw (4,1) grid (6,1);
    \draw (3,1) edge [out=200,in=270,distance=1cm] (3,1);
    \draw (5,1) edge [out=200,in=270,distance=1cm] (5,1);
    \foreach \posx in {1, 2, 3, 4, 5, 6}
    \foreach \posy in {1, 2}
    \draw[fill=white] (\posx, \posy) circle (1mm);;
    \node at (-0.5,1.5) {$G_1$};
    \draw (0,1) grid (0,2);
    \draw (0,1) edge [out=200,in=270,distance=1cm] (0,1);
    \foreach \posy in {1, 2}
    \draw[fill=white] (0, \posy) circle (1mm);
   \node at (3.5,-1) {$G_2$};
    \draw (1,0) grid (3,0);
    \draw (4,0) grid (6,0);
    \draw (3,0) edge [out=235,in=305,distance=1cm] (3,0);
    \draw (5,0) edge [out=235,in=305,distance=1cm] (5,0);
    \foreach \posx in {1, 2, 3, 4, 5, 6}
    \draw[fill=white] (\posx, 0) circle (1mm);

\node at (3,-2) {$G_3 = G_1 \Direct G_2$};

\end{tikzpicture}
\caption{
$G_3$ is the disjoint union of two connected graphs with the same number of nodes and edges.
$G_3$ is the is the direct  product of $G_1$ and $G_2$.  $G_3$ admits as a factor $G_1$ which is a graph with two nodes different from $D_2$.
$G_3$ does not admit $D_2$ as a factor.}
\label{fig:whichever-factorization}
\end{figure}

In order to prove the main result of this paper, we define a class of graphs~$\GG$ as follows.

\begin{defi}[Class~$\GG$]
  A graph $G = (V, E)$ with~$n$ nodes and~$m$ edges belongs to the class $\GG \subset \Gamma_0$ if and only if:

  \begin{enumerate}[label={\bfseries P\arabic*}]
    \item $G$ is undirected, connected and not bipartite;\label{prop:bipartite}
    \item The number of nodes~$n$ is prime;\label{prop:prime}
    \item The number~$s$ of self-loops is strictly less than the number of edges, i.e.,  $s < m$;
    \item $(2m - s)$ is not divisible by~$2$;\label{prop:div2}
    \item $(2m - s)$ is not divisible by~$3$;\label{prop:div3}
  \end{enumerate}
\end{defi}

In the following theorem we give a polynomial time many-one reduction from 
$\gi{\GG}$ to $\compositeness{\GG}$.
A consequence of this result is that
the existence of a polynomial-time algorithm to determine whether a
  given graph in~$\GG$ is composite with respect to the direct product would imply the existence of
  a polynomial-time algorithm for the graph isomorphism problem
  between any two graphs in~$\GG$.
  
\begin{teo}\label{classG}
  \begin{equation*}
    \gi{\GG} \mreduce  \compositeness{\GG}
  \end{equation*}
\end{teo}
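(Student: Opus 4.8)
The plan is to realize the reduction by the map $f(G_1,G_2)=G_1\cup G_2$, whose output I call $X$, and to prove that $X$ is $\Direct$-composite precisely when $G_1\Isomorph G_2$. Since isomorphic graphs agree on their node, edge and self-loop counts, I first dispose of the trivial case: if $G_1,G_2$ disagree on any of these, $f$ returns a fixed graph on a prime number of vertices (automatically $\Direct$-prime); from then on both factors share the same $p=|V(G_i)|$, the same $m$ and the same $s$. Assembling $\mat{M}$ as in~\eqref{eq:g1-iso-g2-M} is clearly polynomial, so $f$ is a $\mreduce$-reduction and the content lies in the equivalence. Two preliminary facts drive the argument. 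First, by P2 together with P4--P5 no two-vertex graph lies in $\GG$ (a connected nonbipartite graph on two vertices has $2m-s\in\{3,4\}$, excluded by P5 and P4), so $p$ is an \emph{odd} prime, $p\ge 3$. Second, writing $\sigma(G):=2m-s$ for the sum of the entries of $\adj{G}$, Observation~\ref{obs:kron-mat-kron-graph} together with the fact that the entry-sum of a Kronecker product is the product of the entry-sums yields multiplicativity, $\sigma(A\Direct B)=\sigma(A)\,\sigma(B)$; by P4--P5 the common value $t:=\sigma(G_1)=\sigma(G_2)$ is coprime to $6$, and since $X$ is a disjoint union, $\sigma(X)=2t$.

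For the implication $G_1\Isomorph G_2\Rightarrow X$ composite I invoke Lemma~\ref{g1-iso-g2}: it supplies $\mat{P},\mat{B}$ with $\tr{\mat{P}}\mat{M}\mat{P}=\mat{I}_2\Kron\mat{B}$, and since $\mat{I}_2=\adj{D_2}$, Observation~\ref{obs:kron-mat-kron-graph} rewrites this as $X\Isomorph D_2\Direct G_3$ where $\adj{G_3}=\mat{B}$. Both factors are nontrivial ($D_2$ has two nodes, $G_3$ has $p\ge 3$), so $X$ is composite.

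The substance is the converse. Suppose $X=A\Direct B$ nontrivially. As $|V(X)|=2p$ with $p$ prime, the only admissible splitting of the vertex count is $\{|V(A)|,|V(B)|\}=\{2,p\}$, so, up to commutativity, $A$ is a two-vertex graph, and I exclude every such $A$ other than $D_2$. Its entry-sum obeys $\sigma(A)\,\sigma(B)=\sigma(X)=2t$ with $\sigma(A)\in\{0,1,2,3,4\}$. Now $\sigma(A)=0$ is impossible because $t>0$ (P3); $\sigma(A)=3$ would give $3\mid 2t$, hence $3\mid t$, contradicting P5; and $\sigma(A)=4$ would give $4\mid 2t$, hence $2\mid t$, contradicting P4. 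The value $\sigma(A)=2$ is realized by exactly two graphs, the single edge $K_2$ and $D_2$; but $K_2\Direct B$ is bipartite (any product having $K_2$ as a factor is), while $X$ is nonbipartite by P1, so $A\ne K_2$. Finally $\sigma(A)=1$ is the one-loop graph $L$, for which $L\Direct B\Isomorph B\cup\overline{K_{p}}$ carries $p$ isolated vertices, whereas each component of $X$ is connected on $p\ge 3$ vertices and so has none. Hence $A=D_2$, and $D_2\Direct B=B\cup B$; matching this two-component graph with $X=G_1\cup G_2$ forces $B$ to be connected with $G_1\Isomorph B\Isomorph G_2$, i.e.\ $G_1\Isomorph G_2$.

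I expect the main obstacle to be exactly this exhaustive elimination of the non-$D_2$ two-vertex factors. The three arithmetic exclusions, $\sigma(A)\in\{0,3,4\}$, are delivered by P3--P5 through the multiplicativity of $\sigma$, which is precisely why those divisibility constraints are built into $\GG$; the two stubborn cases are the graph $K_2$, which ties $D_2$ at $\sigma(A)=2$ and must be removed by the qualitative property P1 (nonbipartiteness), and the one-loop graph $L$, removed by the connectivity forced by P1--P2. A secondary point demanding care is that $\Direct$ distributes over disjoint unions, so when $B$ is itself disconnected one must verify the component bookkeeping (component sizes and the absence of isolated vertices) that underpins the $L$ and $D_2$ cases.
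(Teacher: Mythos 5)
Your proof is correct and follows essentially the same route as the paper's: the identical disjoint-union reduction, with the prime vertex count forcing a $2 \times p$ factorization and the same use of \ref{prop:bipartite}, connectedness, and the divisibility conditions \ref{prop:div2}--\ref{prop:div3} to eliminate every two-vertex factor except $D_2$. The only cosmetic differences are that you organize the case analysis by the entry-sum $\sigma(A)$ instead of enumerating the sixteen $2\times 2$ binary matrices, and you finish by matching components of $B \cup B$ rather than re-invoking Lemma~\ref{g1-iso-g2}.
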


\begin{proof}
Let ${\cal{A}}$ be an algorithm
  for testing compositeness for graphs in~$\GG$.
   The following algorithm solves the
   isomorphism problem for graphs in~$\GG$ relying on a single call of ${\cal{A}}$,
  therefore providing a polynomial time many-one reduction 
  from $\gi{\GG}$ to $\compositeness{\GG}$.
  
\begin{codebox}
\Procname{$\GG\proc{-Graph-Isomorphism}\left(G_1, G_2\right)$}
 \li 	\If $|V(G_1)| \neq |V(G_2)|$ {\bf or} $|E(G_1)| \neq |E(G_2)|$
 \li 	\Then \Return \textsc{no} \End
 \li 	$G \gets G_1 \cup G_2$ \Comment graphs disjoint union
\li \Return ${\cal{A}}(G)$
\end{codebox}
 
\noindent Let us prove that  $\GG\proc{-Graph-Isomorphism}$ is correct.
To this end, we consider two cases:

\paragraph*{$G$ is prime}
Let $G_1, G_2 \in \GG$ and $\mat{M}=\adj{G_1 \cup G_2}$. 
According to Lemma~\ref{g1-iso-g2} and to Observation~\ref{obs:kron-mat-kron-graph}, if~$G$ is prime we can conclude
that~$G_1$ and~$G_2$ are not isomorphic, since if they were, there
should exist a permutation matrix~$\mat{P}$ and a suitable adjacency matrix~$\mat{B}$ 
such that $\tr{\mat{P}} \mat{M} \mat{P} =
\mat{I}_2 \Kron \mat{B}$ and then $G_1 \cup G_2$ would be composite. 

\paragraph*{$G$ is composite}
Let $G_1, G_2 \in \GG$ and $\mat{M}=\adj{G_1 \cup G_2}$. 
Let $n = |V(G_1)| = |V(G_2)|$ be the number of nodes of either $G_1$
or $G_2$. Since $G_1 \in \GG$, $n$ is prime
(\ref{prop:prime}). Consequently, the number of nodes of~$G = G_1 \cup
G_2$ is~$2n$ and its adjacency matrix~$\mat{M}$ has size $2n \times
2n$. Therefore, the only possible factorization of~$\mat{M}$ is
$\mat{M} = \mat{A} \Kron \mat{B}$, where~$\mat{A}$ has size $2 \times 2$ and~$\mat{B}$ has
size $n \times n$. Additionally, since $G_1, G_2 \in \GG$,
their adjacency matrices have exactly $2m-s$ nonzero elements each,
where $m=|E(G_1)|=|E(G_2)|$ is the number of edges of either $G_1$ or
$G_2$ and~$s$ is the number of self-loops. Let us consider each
possible configuration of the matrix~$\mat{A}$:

$$
\begin{pmatrix}0&0\\0&0\end{pmatrix}\quad
\begin{pmatrix}1&0\\0&0\end{pmatrix}\quad
\begin{pmatrix}0&1\\0&0\end{pmatrix}\quad
\begin{pmatrix}0&0\\1&0\end{pmatrix}\quad
\begin{pmatrix}0&0\\0&1\end{pmatrix}\quad
\begin{pmatrix}1&1\\0&0\end{pmatrix}\quad
\begin{pmatrix}1&0\\1&0\end{pmatrix}\quad
\begin{pmatrix}1&0\\0&1\end{pmatrix}
$$
$$
\begin{pmatrix}0&1\\1&0\end{pmatrix}\quad
\begin{pmatrix}0&1\\0&1\end{pmatrix}\quad
\begin{pmatrix}0&0\\1&1\end{pmatrix}\quad
\begin{pmatrix}0&1\\1&1\end{pmatrix}\quad
\begin{pmatrix}1&0\\1&1\end{pmatrix}\quad
\begin{pmatrix}1&1\\0&1\end{pmatrix}\quad
\begin{pmatrix}1&1\\1&0\end{pmatrix}\quad
\begin{pmatrix}1&1\\1&1\end{pmatrix}
$$

\noindent
Since~$G_1$ and~$G_2$ are undirected, $G$ is undirected as well, so
its adjacency matrix~$\mat{M}$ must be symmetric. From~\ref{g1-iso-g2}
we deduce that~$\mat{A}$ must be symmetric, since~$\mat{B}$ is a
nonzero matrix. Therefore, we exclude all configurations of matrix~$\mat{A}$
that are not symmetrix.

$$
\begin{pmatrix}0&0\\0&0\end{pmatrix}\quad
\begin{pmatrix}1&0\\0&0\end{pmatrix}\quad
\color{lightgray}
\begin{pmatrix}0&1\\0&0\end{pmatrix}\quad
\begin{pmatrix}0&0\\1&0\end{pmatrix}\quad
\color{black}
\begin{pmatrix}0&0\\0&1\end{pmatrix}\quad
\color{lightgray}
\begin{pmatrix}1&1\\0&0\end{pmatrix}\quad
\begin{pmatrix}1&0\\1&0\end{pmatrix}\quad
\color{black}
\begin{pmatrix}1&0\\0&1\end{pmatrix}
$$
$$
\begin{pmatrix}0&1\\1&0\end{pmatrix}\quad
\color{lightgray}
\begin{pmatrix}0&1\\0&1\end{pmatrix}\quad
\begin{pmatrix}0&0\\1&1\end{pmatrix}\quad
\color{black}
\begin{pmatrix}0&1\\1&1\end{pmatrix}\quad
\color{lightgray}
\begin{pmatrix}1&0\\1&1\end{pmatrix}\quad
\begin{pmatrix}1&1\\0&1\end{pmatrix}\quad
\color{black}
\begin{pmatrix}1&1\\1&0\end{pmatrix}\quad
\begin{pmatrix}1&1\\1&1\end{pmatrix}
$$

\noindent
Since~$G_1$ and~$G_2$ are connected, the degrees of all their nodes
must be strictly grater than zero. Therefore, $\mat{A}$ must contain
more than a single nonzero element as, conversely, the resulting
matrix~$\mat{M}$ would have at least~$n$ unconnected nodes with zero
degree. Therefore, we exclude all configurations of~$\mat{A}$ that have
less than two nonzero elements.

$$
\color{lightgray}
\begin{pmatrix}0&0\\0&0\end{pmatrix}\quad
\begin{pmatrix}1&0\\0&0\end{pmatrix}\quad
\begin{pmatrix}0&1\\0&0\end{pmatrix}\quad
\begin{pmatrix}0&0\\1&0\end{pmatrix}\quad
\begin{pmatrix}0&0\\0&1\end{pmatrix}\quad
\begin{pmatrix}1&1\\0&0\end{pmatrix}\quad
\begin{pmatrix}1&0\\1&0\end{pmatrix}\quad
\color{black}
\begin{pmatrix}1&0\\0&1\end{pmatrix}
$$
$$
\begin{pmatrix}0&1\\1&0\end{pmatrix}\quad
\color{lightgray}
\begin{pmatrix}0&1\\0&1\end{pmatrix}\quad
\begin{pmatrix}0&0\\1&1\end{pmatrix}\quad
\color{black}
\begin{pmatrix}0&1\\1&1\end{pmatrix}\quad
\color{lightgray}
\begin{pmatrix}1&0\\1&1\end{pmatrix}\quad
\begin{pmatrix}1&1\\0&1\end{pmatrix}\quad
\color{black}
\begin{pmatrix}1&1\\1&0\end{pmatrix}\quad
\begin{pmatrix}1&1\\1&1\end{pmatrix}
$$

\noindent
Let us now consider the number~$b$ of nonzero elements in
$\mat{B}$. Should~$\mat{A}$ have three nonzero elements, the resulting
number of nonzero elements in~$\mat{M}$ would be~$3b$, which is
divisible by~$3$. Moreover, the number of nonzero elements
of~$\mat{M}$ is equal to the number of nonzero elements of the
adjacency matrices of~$G_1$ and~$G_2$:

\begin{equation}
  3b = 2(2m-s)\label{eq:3b}
\end{equation}

\noindent
Since $G_1, G_2 \in \GG$, we know that the number of nonzero
elements ($2m-s$) in their adjacency matrices must not be divisible
by~$3$ (\ref{prop:div3}). Thus, $2(2m-s)$ must not be divisible by~$3$
either, contradicting~\eqref{eq:3b}. We conclude that~$\mat{A}$ can
not contain three nonzero elements.

$$
\color{lightgray}
\begin{pmatrix}0&0\\0&0\end{pmatrix}\quad
\begin{pmatrix}1&0\\0&0\end{pmatrix}\quad
\begin{pmatrix}0&1\\0&0\end{pmatrix}\quad
\begin{pmatrix}0&0\\1&0\end{pmatrix}\quad
\begin{pmatrix}0&0\\0&1\end{pmatrix}\quad
\begin{pmatrix}1&1\\0&0\end{pmatrix}\quad
\begin{pmatrix}1&0\\1&0\end{pmatrix}\quad
\color{black}
\begin{pmatrix}1&0\\0&1\end{pmatrix}
$$
$$
\begin{pmatrix}0&1\\1&0\end{pmatrix}\quad
\color{lightgray}
\begin{pmatrix}0&1\\0&1\end{pmatrix}\quad
\begin{pmatrix}0&0\\1&1\end{pmatrix}\quad
\begin{pmatrix}0&1\\1&1\end{pmatrix}\quad
\begin{pmatrix}1&0\\1&1\end{pmatrix}\quad
\begin{pmatrix}1&1\\0&1\end{pmatrix}\quad
\begin{pmatrix}1&1\\1&0\end{pmatrix}\quad
\color{black}
\begin{pmatrix}1&1\\1&1\end{pmatrix}
$$

\noindent
Similarly, should the matrix~$\mat{A}$ have four nonzero elements, the
number of nonzero elements in~$\mat{M}$ would be $4b$; from the
same reasoning above, we get:

\begin{equation}
  4b = 2(2m-s)
\end{equation}

\noindent
However, from~\ref{prop:div2} we have that $(2m-s)$ is not divisible
by~$2$, and therefore $2(2m-s)$ is not divisible by~$4$. We conclude
that~$\mat{A}$ can not have four nonzero elements.

$$
\color{lightgray}
\begin{pmatrix}0&0\\0&0\end{pmatrix}\quad
\begin{pmatrix}1&0\\0&0\end{pmatrix}\quad
\begin{pmatrix}0&1\\0&0\end{pmatrix}\quad
\begin{pmatrix}0&0\\1&0\end{pmatrix}\quad
\begin{pmatrix}0&0\\0&1\end{pmatrix}\quad
\begin{pmatrix}1&1\\0&0\end{pmatrix}\quad
\begin{pmatrix}1&0\\1&0\end{pmatrix}\quad
\color{black}
\begin{pmatrix}1&0\\0&1\end{pmatrix}
$$
$$
\begin{pmatrix}0&1\\1&0\end{pmatrix}\quad
\color{lightgray}
\begin{pmatrix}0&1\\0&1\end{pmatrix}\quad
\begin{pmatrix}0&0\\1&1\end{pmatrix}\quad
\begin{pmatrix}0&1\\1&1\end{pmatrix}\quad
\begin{pmatrix}1&0\\1&1\end{pmatrix}\quad
\begin{pmatrix}1&1\\0&1\end{pmatrix}\quad
\begin{pmatrix}1&1\\1&0\end{pmatrix}\quad
\begin{pmatrix}1&1\\1&1\end{pmatrix}
$$

\noindent
Finally, we point out that since~$G_1$ and~$G_2$ are not
bipartite (\ref{prop:bipartite}), $G$ is not bipartite as
well. Should~$\mat{A} = \begin{pmatrix}0&1\\1&0\end{pmatrix}$, matrix
$\mat{M} = \mat{A} \Kron \mat{B}$ would represent a bipartite graph,
where the first~$n$ nodes are only connected to the other~$n$ nodes
and viceversa. Therefore, $\mat{A}$ can not be in that form.

$$
\color{lightgray}
\begin{pmatrix}0&0\\0&0\end{pmatrix}\quad
\begin{pmatrix}1&0\\0&0\end{pmatrix}\quad
\begin{pmatrix}0&1\\0&0\end{pmatrix}\quad
\begin{pmatrix}0&0\\1&0\end{pmatrix}\quad
\begin{pmatrix}0&0\\0&1\end{pmatrix}\quad
\begin{pmatrix}1&1\\0&0\end{pmatrix}\quad
\begin{pmatrix}1&0\\1&0\end{pmatrix}\quad
\color{black}
\begin{pmatrix}1&0\\0&1\end{pmatrix}
$$
$$
\color{lightgray}
\begin{pmatrix}0&1\\1&0\end{pmatrix}\quad
\begin{pmatrix}0&1\\0&1\end{pmatrix}\quad
\begin{pmatrix}0&0\\1&1\end{pmatrix}\quad
\begin{pmatrix}0&1\\1&1\end{pmatrix}\quad
\begin{pmatrix}1&0\\1&1\end{pmatrix}\quad
\begin{pmatrix}1&1\\0&1\end{pmatrix}\quad
\begin{pmatrix}1&1\\1&0\end{pmatrix}\quad
\begin{pmatrix}1&1\\1&1\end{pmatrix}
$$

\noindent
We conclude that~$\mat{A} = \mat{I}_2$. Thus, according to
Lemma~\ref{g1-iso-g2}, $G_1$ and~$G_2$ are isomorphic.
\end{proof}

Theorem~\ref{classG} shows that, within the class~$\GG$,  there exists an intimate relation between
graph primality with respect to the direct product and graph isomorphism. 
In what follows we will extend Theorem~\ref{classG}
to the class of graph~$\Gamma_0$ by describing a polynomial-time, isomorphism-preserving
transformation that maps any connected graph~$G$ into a graph in~$\GG$.
Before doing so, we need to prove a small technical
lemma.

\begin{lemma}\label{lemma:div2div3}
  For each $n \in \Z$ there exists $d \in \{0, 1, 2, 3\}$ such that
  $(n+d)$ is not divisible by two nor by three; formally, $(n+d) \bmod
  2 \neq 0$ and $(n+d) \bmod 3 \neq 0$.
\end{lemma}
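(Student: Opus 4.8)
The plan is to pass to residues modulo $6$. The crucial observation is elementary: an integer is divisible neither by $2$ nor by $3$ exactly when it is congruent to $1$ or $5$ modulo $6$. Equivalently, the ``bad'' residues, those divisible by $2$ or by $3$, are precisely $\{0,2,3,4\} \pmod 6$, while $\{1,5\}$ are the ``good'' ones. So the statement is equivalent to: among the four consecutive integers $n, n+1, n+2, n+3$, at least one is congruent to $1$ or $5$ modulo $6$.

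First I would argue this by a gap (pigeonhole) observation. Looking at the cyclic pattern of residues $0,1,2,3,4,5$ repeated, the good residues $1$ and $5$ are separated by gaps of length $4$ (from $1$ up to $5$) and $2$ (from $5$ up to the next $1\equiv 7$). Hence the longest run of consecutive bad residues is the block $2,3,4$, which has length $3$. Since $n, n+1, n+2, n+3$ realizes four consecutive residues mod $6$, it cannot lie entirely inside a run of bad residues, and therefore some $n+d$ with $d \in \{0,1,2,3\}$ is good.

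To make this fully rigorous and avoid any informal appeal to ``windows,'' I would then replace the gap argument by an explicit exhaustive case analysis on $r = n \bmod 6$, exhibiting a valid $d$ in each of the six cases: for $r=0$ take $d=1$ (so $n+d\equiv 1$); for $r=1$ take $d=0$; for $r=2$ take $d=3$ (so $n+d\equiv 5$); for $r=3$ take $d=2$; for $r=4$ take $d=1$; and for $r=5$ take $d=0$. Checking that each chosen value is coprime to $6$ is immediate, and the six residue classes are plainly exhaustive, so this closes the argument. Note that for negative $n$ one simply uses the residue $n \bmod 6 \in \{0,\dots,5\}$, so the claim covers all of $\Z$ uniformly.

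The main ``obstacle'' is essentially nonexistent, since the lemma reduces to a finite verification; the only point requiring care is confirming that the range $\{0,1,2,3\}$ is genuinely sufficient. The binding case is $n \equiv 2 \pmod 6$, where the nearest good residue is $n+3 \equiv 5$, so that a shift of $3$ is unavoidable. This explains why the lemma is stated with $d \in \{0,1,2,3\}$ rather than a smaller set, and it is exactly the value that will be needed when padding the edge count $(2m-s)$ in the subsequent reduction into the class $\GG$.
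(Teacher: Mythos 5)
Your proof is correct and essentially the same as the paper's: the paper also does an exhaustive six-case analysis, indexed by the pair of residues modulo $2$ and modulo $3$ (which by CRT is exactly your indexing modulo $6$), and its table assigns the identical shifts $d$ in every case. Your preliminary pigeonhole remark is just added motivation on top of the same finite verification.
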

\begin{proof}
  Let us denote with $\llbracket n \rrbracket_k$ the equivalence class
  of all integers that are congruent to~$n$ modulo~$k$ (also called
  residual class): $\llbracket n \rrbracket_k = \{ \ldots, n-2k, n-k,
  n, n+k, n+2k, \ldots \}$.  The statement of the lemma can then be
  rephrased as: for each $n \in \Z$ there exists $d \in \{0, 1, 2,
  3\}$ such that $(n+d) \notin \left( \llbracket 0 \rrbracket_2 \cup \llbracket 0 \rrbracket_3 \right)$.

  \noindent
  Using well-known properties of residual classes we can derive the
  following table, that shows the value of~$d$ for any possible
  combination of residual classes modulo~$2$ and modulo~$3$ that~$n$
  may belong to.
  
  \begin{center}
    \begin{tabular}{c|ccc}
      $n$ & $\llbracket 0 \rrbracket_3$ & $\llbracket 1 \rrbracket_3$
      & $\llbracket 2 \rrbracket_3$ \\ \cmidrule{1-4} $\llbracket 0
      \rrbracket_2$ & $d=1$ & $d=1$ & $d=3$ \\ $\llbracket 1
      \rrbracket_2$ & $d=2$ & $d=0$ & $d=0$ \\
    \end{tabular}
  \end{center}

  \noindent
  For example, if $n \in \llbracket 1 \rrbracket_2 \cap \llbracket 0
  \rrbracket_3$, then $(n+2) \in \llbracket 1 \rrbracket_2$ and $(n+2)
  \in \llbracket 2 \rrbracket_3$.
  
\end{proof}

\begin{teo}\label{th:trasformazione}
  There exists a mapping $f: \Gamma_0 \rightarrow \GG$ such
  that for every two connected graphs $G_1, G_2 \in \Gamma_0$ with the
  same number of nodes and edges, $G_1 \Isomorph G_2$ if and only if $f(G_1) \Isomorph
  f(G_2)$. Furthermore, $f(G)$ can be computed in polynomial time with
  respect to the size of~$G$.
\end{teo}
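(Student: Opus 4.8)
The plan is to exhibit an explicit gadget construction $f$ that, given a connected graph $G\in\Gamma_0$ with at least two nodes (the only case relevant to the statement), say $n$ nodes and $e$ edges, augments $G$ with a canonically identifiable \emph{apex} vertex and a canonically identifiable family of \emph{pendant} vertices, and then tunes two integer parameters so that the result satisfies \ref{prop:bipartite}--\ref{prop:div3}. Concretely, I would first form $G^{+}$ from $G$ by adding a new vertex $u$ joined to every original vertex. Since $u$ together with any edge of $G$ spans a triangle, $G^{+}$ is connected and non-bipartite (using $e\ge 1$, which holds for every connected nontrivial graph), so \ref{prop:bipartite} is secured independently of the numerical tuning to follow.

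Next I would make the number of nodes prime. Let $N$ be the smallest prime strictly greater than $n+1$; by Bertrand's postulate $N\le 2(n+1)$, and $N$ is found in polynomial time by testing successive integers for primality. I attach $a:=N-n-1\ge 1$ pendant vertices to $u$, so that $f(G)$ has exactly $N$ nodes, giving \ref{prop:prime}. Forcing $a\ge 1$ guarantees that $u$ has non-loop degree $n+a\ge n+1$, strictly larger than the degree of any original vertex (at most $n$) and of any pendant (degree $1$). Hence $u$ is the \emph{unique} maximum-degree vertex, the pendants are exactly the vertices of non-loop degree $1$, and the original vertices are all the rest; this rigidity is what drives the backward implication.

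It remains to adjust the number of nonzero entries of the adjacency matrix. Writing $s_0$ for the number of self-loops of $G$, the graph built so far has $C_0=(2e-s_0)+2(n+a)$ ones in its adjacency matrix and no self-loops outside $G$. By Lemma~\ref{lemma:div2div3} there is $d\in\{0,1,2,3\}$ with $C_0+d$ coprime to $6$, and I add $d$ self-loops to the apex $u$. This leaves $N$ (still prime) and the three vertex classes unchanged, makes the self-loop count $s=s_0+d$ satisfy $s<m$ (because $s_0\le e<e+n+a$), and yields $2m-s=C_0+d$ divisible by neither $2$ nor $3$, establishing \ref{prop:div2} and \ref{prop:div3}. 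Thus $f(G)\in\GG$, and since only $1+a+d=O(n)$ vertices and $O(n)$ edges are added, $f(G)$ has size $O(n+e)$ and is computable in polynomial time.

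Finally I would verify the isomorphism equivalence. For the forward implication, $a$ depends only on $n$ and $d$ only on $(n,e,s_0)$, all preserved by isomorphism, so isomorphic $G_1,G_2$ receive identical gadgets and any isomorphism $G_1\to G_2$ extends to one of $f(G_1)\to f(G_2)$ by fixing the apex and matching pendants. For the backward implication, any isomorphism $\phi\colon f(G_1)\to f(G_2)$ must respect the structurally defined classes (apex to apex as the unique maximum-degree vertex, pendants to pendants as the non-loop degree-$1$ vertices, originals to originals), and since the subgraph induced on the original vertices is exactly $G_i$, the restriction of $\phi$ is an isomorphism $G_1\to G_2$. The point requiring the most care is precisely this rigidity: I must ensure no original vertex can masquerade as the apex or a pendant, which is why I force $a\ge 1$ (breaking a possible degree tie when $n$ is already prime) and localise all added self-loops on the apex, and why identifying the classes by \emph{non-loop} degree keeps the argument valid even when $G\in\Gamma_0$ already carries self-loops.
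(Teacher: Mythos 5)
Your step ``add $d$ self-loops to the apex $u$'' is not available in this paper's setting when $d \ge 2$, and this is a genuine gap. Graphs in $\Gamma_0$ are not multigraphs: the edge set is a set $E \subseteq V \times V$ and the adjacency matrix is \emph{binary}, so each vertex carries at most one self-loop ($\mat{M}_{uu} \in \{0,1\}$). Lemma~\ref{lemma:div2div3} really does force $d=2$ or $d=3$ in some residue cases, and these occur for concrete inputs: take $G$ to be a triangle with one self-loop, so $n=3$, $e=4$, $s_0=1$, hence $N=5$, $a=1$, and $C_0 = (2e-s_0)+2(n+a) = 15 \equiv 3 \pmod 6$, which forces $d=2$. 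For such $G$ your $f(G)$ is either not a well-defined element of $\Gamma_0$, or, collapsing the repeated loops, has $2m-s = C_0+1 = 16$, violating \ref{prop:div2}. Either way properties \ref{prop:div2}--\ref{prop:div3} are not secured, so $f(G) \notin \GG$ and the reduction to Theorem~\ref{classG} breaks. Since you explicitly insist on localising \emph{all} added loops at the apex, this is a structural feature of your construction, not a slip of wording.

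The repair is small and leaves the rest of your argument intact: guarantee at least three pendants (e.g., take $N$ to be the smallest prime exceeding $n+4$; Bertrand--Chebyshev still bounds $N$ linearly and keeps everything polynomial) and place the $d$ self-loops on $d$ \emph{distinct} pendant vertices. Because your three classes are identified by \emph{non-loop} degree, the apex/pendant/original rigidity is untouched, isomorphisms still preserve loops, and $d$ still depends only on $(n,e,s_0)$, so both directions of the equivalence survive. Note that this is exactly why the paper's own gadget distributes its corrective loops one per vertex over $v_{n+2}, v_{n+3}, v_{n+4}$. Apart from this point, your route is a legitimate and genuinely different alternative to the paper's: the paper attaches a cycle of new vertices of length $p-n>n$ (with $2n<p<4n$ prime) to an apex joined to all of $V(G)$, and argues the backward implication by claiming the long cycle must map to the long cycle, whereas your degree-based class-preservation argument (unique maximum-degree apex, pendants as the non-loop degree-$1$ vertices) is, once the loop placement is fixed, actually tighter than the paper's cycle-length argument.
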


\begin{proof}
  Given a connected graph $G \in \Gamma_0$, let us define $G' = f(G)$.
  We show how $G'$ is computed. Let $m=|E(G)|$ and
  $n=|V(G)|$. According to the Bertrand-Chebyshev
  theorem~\cite{inbookAZ2010}, for any integer $q>1$ there exists a
  prime in the set $\{q+1, \ldots, 2q-1\}$, and such prime can be
  found in polynomial time~\cite{tao2012}. Therefore, there is a
  prime~$p$ such that $2n < p < 4n$.

  \noindent
  The vertex set of~$G'$ is defined as
  
  \begin{equation*}
    V(G') = V(G) \cup \{v_{n+1}, v_{n+2}, \ldots, v_{p}\} 
  \end{equation*}

  \noindent where $v_{n+1}, v_{n+2}, \ldots v_{p}$ are new nodes.
  
  \usetikzlibrary{calc}
\newcommand\irregularcircle[2]{
  let \n1 = {(#1)+rand*(#2)} in
  +(0:\n1)
  \foreach \a in {10,20,...,350}{
    let \n1 = {(#1)+rand*(#2)} in
    -- +(\a:\n1)
  } -- cycle
}

\begin{center}
\begin{tikzpicture}
\tikzstyle{vertex}=[auto=left,circle,draw=black,semithick,minimum size=17pt,inner sep=2pt]
\tikzstyle{vertexw}=[auto=left,circle,draw=black,fill=white,semithick,minimum size=17pt,inner sep=2pt]

  \draw[black,fill=black!20,rounded corners=.5mm] (-1.5, 1.25) \irregularcircle{1.5cm}{1.5mm};
  \node (g1) at (-1.5, 1.25) {$G_1$};
  \node[vertexw] (ga) at (-2, 2) {};
  \node[vertexw] (gb) at (-1.8, 0.5) {};
  \node[vertexw] (gc) at (-1, 0.7) {};
  \node[vertexw] (gd) at (-0.7, 1.5) {};


  \node[vertex] (n1) at (2, 1) {$\scriptscriptstyle v_{n+1}$};
  \node[vertex] (n2) at (2.8,2)  {$\scriptscriptstyle v_{n+2}$};
  \node[vertex] (n3) at (4,1.65)  {$\scriptscriptstyle v_{n+3}$};
  \node[vertex] (n4) at (4, 0.35) {$\scriptstyle \dots$};
  \node[vertex] (n5) at (2.8, 0) {{$\scriptscriptstyle v_{p}$}};

\end{tikzpicture}
\end{center}

  \noindent
  The edge set~$E(G')$ is constructed incrementally from~$E(G)$, as
  follows. Let
  \begin{align*}
    C_f &= \left\lbrace \{x, v_{n+1}\}\ |\ x \in V(G) \right\rbrace \\
    C_c &= \left\lbrace \{v_{n+1}, v_{n+2}\}, \{v_{n+2}, v_{n+3}\}, \ldots, \{v_p, v_{n+1} \right\rbrace
  \end{align*}

  \noindent that is, $C_f$ is a set of new edges that connect each
  node in~$V(G)$ to the first newly created node~$v_{n+1}$, and~$C_c$
  is a set of new edges that form a cycle within the new nodes. Since
  we have chosen~$p$ such that $2n < p < 4n$, the length of the cycle
  in~$C_f$ is greater than~$n$.

  \begin{center}
\begin{tikzpicture}
\tikzstyle{vertex}=[auto=left,circle,draw=black,semithick,minimum size=17pt,inner sep=2pt]
\tikzstyle{vertexw}=[auto=left,circle,draw=black,fill=white,semithick,minimum size=17pt,inner sep=2pt]

  \draw[black,fill=black!20,rounded corners=.5mm] (-1.5, 1.25) \irregularcircle{1.5cm}{1.5mm};
  \node (g1) at (-1.5, 1.25) {$G_1$};
  \node[vertexw] (ga) at (-2, 2) {};
  \node[vertexw] (gb) at (-1.8, 0.5) {};
  \node[vertexw] (gc) at (-1, 0.7) {};
  \node[vertexw] (gd) at (-0.7, 1.5) {};


  \node[vertex] (n1) at (2, 1) {$\scriptscriptstyle v_{n+1}$};
  \node[vertex] (n2) at (2.8,2)  {$\scriptscriptstyle v_{n+2}$};
  \node[vertex] (n3) at (4,1.65)  {$\scriptscriptstyle v_{n+3}$};
  \node[vertex] (n4) at (4, 0.35) {$\scriptstyle \dots$};
  \node[vertex] (n5) at (2.8, 0) {{$\scriptscriptstyle v_{p}$}};

  \path (n1) edge[bend left] node {} (n2);
  \path (n2) edge[bend left] node {} (n3);
  \path (n3) edge[bend left] node {} (n4);
  \path (n4) edge[bend left] node {} (n5);
  \path (n5) edge[bend left] node {} (n1);

  \foreach \from/\to in {n1/ga,n1/gb,n1/gc,n1/gd}
  \draw (\from) -- (\to);

\end{tikzpicture}
\end{center}

  \noindent
  We finally add a number~$s$ of self-loops within the new nodes in order to
  meet conditions~\ref{prop:div2} and~\ref{prop:div3}. Lemma~\ref{lemma:div2div3} guarantees that~$s$ is at most~$3$. Thus:
  \begin{align*}
    s=0 \implies C_s &= \emptyset\\
    s=1 \implies C_s &= \left\lbrace \{v_{n+2}, v_{n+2}\} \right\rbrace\\
    s=2 \implies C_s &= \left\lbrace \{v_{n+2}, v_{n+2}\},\{v_{n+3}, v_{n+3}\} \right\rbrace\\
    s=3 \implies C_s &= \left\lbrace \{v_{n+2}, v_{n+2}\},\{v_{n+3}, v_{n+3}\},\{v_{n+4}, v_{n+4}\} \right\rbrace
  \end{align*}

  \noindent
  The edge set~$E(G')$ is therefore
  defined as $E(G') = E(G) \cup C_f \cup C_c \cup C_s$.

  \noindent
  Observe that, since~$G$ is connected, the edge subset~$C_f$ induces at
  least one odd cycle (specifically, at least one cycle of length
  three), and therefore $G'$ is not bipartite\footnote{A well known result in graph theory states that a graph is bipartite if and only if it has no odd cycles}. Therefore we conclude
  that $G' \in \GG$.

  \begin{center}
\begin{tikzpicture}
\tikzstyle{vertex}=[auto=left,circle,draw=black,semithick,minimum size=17pt,inner sep=2pt]
\tikzstyle{vertexw}=[auto=left,circle,draw=black,fill=white,semithick,minimum size=17pt,inner sep=2pt]


  \node (rnd1) at (-2, 2) {};
  \node (rnd2) at (-1.8, 0.5) {};
  
  \draw[black,fill=black!20,rounded corners=.5mm] (-1.5, 1.25) \irregularcircle{1.5cm}{1.5mm};
  \node (g1) at (-1.5, 1.25) {$G_1$};


  \node[vertex] (n1) at (2, 1) {$\scriptscriptstyle v_{n+1}$};
  \node[vertex] (n2) at (2.8,2)  {$\scriptscriptstyle v_{n+2}$};
  \node[vertex] (n3) at (4,1.65)  {$\scriptscriptstyle v_{n+3}$};
  \node[vertex] (n4) at (4, 0.35) {$\scriptstyle \dots$};
  \node[vertex] (n5) at (2.8, 0) {{$\scriptscriptstyle v_{p}$}};

  \path (n1) edge[bend left] node {} (n2);
  \path (n2) edge[bend left] node {} (n3);
  \path (n3) edge[bend left] node {} (n4);
  \path (n4) edge[bend left] node {} (n5);
  \path (n5) edge[bend left] node {} (n1);
  \path (n2) edge [out=140,in=60,looseness=6] node[above] {} (n2);
  \path (n3) edge [out=90,in=10,looseness=6] node[above] {} (n3);

  \foreach \from/\to in {n1/rnd1,n1/rnd2}
  \draw (\from) -- (\to);
  
  \node[vertexw] (ga) at (-2, 2) {};
  \node[vertexw] (gb) at (-1.8, 0.5) {};
  \draw (ga) -- (gb);

	
\end{tikzpicture}
\end{center}
  
  \noindent
  We now prove that $G_1 \Isomorph G_2 \iff f(G_1) \Isomorph f(G_2)$.
  
  \noindent ($\implies$) Assume $G_1 \Isomorph G_2$. Then, the
  isomorphism can be trivially extended to~$f(G_1)$ and~$f(G_2)$ since
  these graphs are obtained from $G_1, G_2$ by adding an identical
  structure.

  \noindent ($\Longleftarrow$) Assume $f(G_1) \Isomorph f(G_2)$. The
  only possible isomorphisms are those that map one of the 
  cycles~$C_c$ to the corresponding one on the other graph.
  Since in the transformation we have chosen $p > 2n$, the cycle
  length of~$C_c$ is larger than~$n$, and therefore is larger than any
  simple cycle in~$G_1$ (or~$G_2$). Consequently, the isomorphism
  between $f(G_1)$ and $f(G_2)$ can be restricted to an isomorphism
  between~$G_1$ and~$G_2$.
\end{proof}

Theorem~\ref{th:trasformazione} allows us to assert the main result of
this paper, that is the relation between primality test and graphs
isomorphism.

In what follows we denote by $\conn$ and $\unc$ be the sets of connected and unconnected graphs, respectively 
and by $\nonbip$ be the set of nonbiparite graphs.

\begin{teo}\label{th:GIhard}
  \begin{equation*}
    \text{\gi{$C$}} \mreduce \text{\compositeness{$\unc \cap \nonbip$}}
  \end{equation*}
\end{teo}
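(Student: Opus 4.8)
The plan is to compose Theorems~\ref{classG} and~\ref{th:trasformazione}. I first record the key fact supplied by the proof of Theorem~\ref{classG}: for any $H_1, H_2 \in \GG$ having the same number of nodes and edges, the disjoint union $H_1 \cup H_2$ is composite if and only if $H_1 \Isomorph H_2$.

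Given an instance $(G_1, G_2)$ of $\gi{\conn}$, I would define the reduction $F(G_1, G_2)$ as follows. If $|V(G_1)| \neq |V(G_2)|$ or $|E(G_1)| \neq |E(G_2)|$, then $G_1$ and $G_2$ are not isomorphic, and I output a fixed graph $X_0 \in \unc \cap \nonbip$ that is prime with respect to the direct product; such an $X_0$ exists, because for any two non-isomorphic connected graphs $A, B$ with the same number of nodes and edges we have $f(A), f(B) \in \GG$, and by Theorem~\ref{th:trasformazione} $f(A)$ and $f(B)$ are not isomorphic, so the fact above makes $X_0 = f(A) \cup f(B)$ prime. Otherwise I compute $G_1' = f(G_1)$ and $G_2' = f(G_2)$ through the map of Theorem~\ref{th:trasformazione} and output the disjoint union $G = G_1' \cup G_2'$. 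Since $f$ appends to a graph a gadget determined solely by its numbers of nodes and edges, equal-size inputs produce images $G_1', G_2'$ of equal size, which is precisely the precondition under which the recorded equivalence applies.

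I would then check that $F$ always outputs a graph in $\unc \cap \nonbip$: in either branch the output is a disjoint union of two nonempty graphs, hence unconnected, and it contains a component lying in $\GG$, which is nonbipartite by property~\ref{prop:bipartite}, so the whole graph has an odd cycle and is itself nonbipartite. For correctness, if $G_1 \Isomorph G_2$ then the node and edge counts agree, the main branch runs, Theorem~\ref{th:trasformazione} gives $G_1' \Isomorph G_2'$, and the recorded equivalence makes $G_1' \cup G_2'$ composite. Conversely, if $G_1$ and $G_2$ are not isomorphic, then either the counts differ and $F = X_0$ is prime, or the main branch runs and Theorem~\ref{th:trasformazione} shows that $G_1'$ and $G_2'$ are not isomorphic, so the recorded equivalence makes $G_1' \cup G_2'$ prime. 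Hence $G_1 \Isomorph G_2$ if and only if $F(G_1, G_2)$ is composite, and $F$ is computable in polynomial time, since the size test is immediate, $f$ is polynomial by Theorem~\ref{th:trasformazione}, the disjoint union is linear, and $X_0$ is a precomputed constant.

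The step I expect to be the main obstacle is the treatment of size-mismatched inputs. Theorem~\ref{th:trasformazione} only preserves (non)isomorphism for connected graphs with equal numbers of nodes and edges, and for mismatched inputs the images $f(G_1)$ and $f(G_2)$ need not even have compatible sizes, so the equivalence drawn from Theorem~\ref{classG} cannot be applied to their union. Rerouting such inputs to a fixed prime instance $X_0 \in \unc \cap \nonbip$ sidesteps this difficulty; the only genuinely nontrivial point is verifying that a prime $X_0$ in $\unc \cap \nonbip$ actually exists, which relies on the interior elimination argument of Theorem~\ref{classG} rather than on its statement alone.
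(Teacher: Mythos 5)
Your proposal is correct and follows essentially the same route as the paper: check node/edge counts, apply the transformation $f$ of Theorem~\ref{th:trasformazione} to both inputs, take the disjoint union, and invoke the composite-iff-isomorphic equivalence established in the proof of Theorem~\ref{classG}. Your only departure is the size-mismatch branch, where you output a fixed prime graph $X_0 \in \unc \cap \nonbip$ instead of returning \ansno{} outright as the paper's pseudocode does; this is a minor technical refinement (a many-one reduction must always produce an instance of the target problem), not a different approach.
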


\begin{proof}
  Assume that there exists an algorithm ${\cal{A}}$ that solves the
  \compositeness{$\unc \cap \nonbip$} 
 decision problem.
 Then, the following algorithm solves the
  \gi{$\conn$} decision problem 
  and, at the same time,
  provides a polynomial time many-one reduction 
  from $\gi{\conn}$ to \compositeness{$\unc \cap \nonbip$}.

\begin{codebox}
\Procname{$\proc{Graph-Isomorphism}(G_1,G_2)$}     
 \li 	\If $|V(G_1)| \neq |V(G_2)|$ {\bf or} $|E(G_1)| \neq |E(G_2)|$
 \li 	\Then \Return \textsc{no} \End
 \li    $G_3 \gets f(G_1)$ \Comment Theorem \ref{th:trasformazione}
 \li    $G_4 \gets f(G_2)$ \Comment Theorem \ref{th:trasformazione}
 \li 	$G \gets G_3 \cup G_4$ \Comment graphs disjoint union
 \li	\Return ${\cal{A}} (G)$
 \end{codebox}

\noindent
In fact,
by Theorem~\ref{th:trasformazione},
$G_1$ is isomorphic to~$G_2$ if and only if~$f(G_1)$ is isomorphic to~$f(G_2)$.
Since both~$f(G_1)$ and~$f(G_2)$ belong to $\GG$, then by Theorem~\ref{classG},
 $G = G_3 \cup G_4$ is decomposable
if and only if~$G_3$ is isomorphic to~$G_4$.

\noindent
It is easy to verify that $\proc{Graph-Isomorphism}$ is a polynomial time many-one reduction.

\end{proof}

Note that
\compositeness{$\unc \cap \nonbip$} remains $GI$-hard even if we relax the \emph{undirected}
constraint or the \emph{nonbipartite} one, as the resulting class of graphs would be larger than the one which was considered throughout our discussion.

\begin{corol}\label{corol:turing-reduction}
\primality{$\unc \cap \nonbip$} is $GI$-hard or, equvalently,
  \begin{equation*}
    \text{\gi{$\conn$}} \treduce \text{\primality{$\unc \cap \nonbip$}}
  \end{equation*}
\end{corol}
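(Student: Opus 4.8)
The plan is to derive Corollary~\ref{corol:turing-reduction} directly from Theorem~\ref{th:GIhard} by exploiting the elementary fact that a graph is composite if and only if it is not prime. First I would recall that Theorem~\ref{th:GIhard} establishes the polynomial-time many-one reduction $\gi{\conn} \mreduce \compositeness{\unc \cap \nonbip}$. The key observation is that \primality{$\unc \cap \nonbip$} and \compositeness{$\unc \cap \nonbip$} are complementary decision problems on the same input class: for any nontrivial graph $G \in \unc \cap \nonbip$, the answer of \compositeness{$\unc \cap \nonbip$} on $G$ is \ansyes{} exactly when the answer of \primality{$\unc \cap \nonbip$} on $G$ is \ansno{}.

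The next step is to convert the many-one reduction of Theorem~\ref{th:GIhard} into a Turing reduction to the primality problem. Given an oracle for \primality{$\unc \cap \nonbip$}, I would build an oracle for \gi{\conn} as follows: on input $(G_1, G_2)$, run the polynomial-time reduction from Theorem~\ref{th:GIhard} to obtain the graph $G = f(G_1) \cup f(G_2) \in \unc \cap \nonbip$, then query the primality oracle on $G$, and \emph{negate} its answer. By the correctness of Theorem~\ref{th:GIhard}, $G$ is composite if and only if $G_1 \Isomorph G_2$; since composite is the negation of prime on this input class, the negated primality answer is precisely the isomorphism answer. This procedure makes a single oracle call and performs only the polynomial-time work of computing $f$ and forming the disjoint union, so both defining conditions of a polynomial-time Turing reduction are satisfied. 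Finally, since $\gi{\conn}$ is $GI$-hard (as noted in the excerpt, $\gi{general\ graphs\hspace{-0.03cm}} \treduce \gi{connected\ graphs\hspace{-0.03cm}}$), transitivity of $\treduce$ yields $\gi{general\ graphs\hspace{-0.03cm}} \treduce \primality{\unc \cap \nonbip}$, which is exactly the $GI$-hardness of \primality{$\unc \cap \nonbip$}.

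I do not anticipate a genuine obstacle here, as the argument is essentially a bookkeeping exercise relating a decision problem to its complement and upgrading a many-one reduction to a Turing reduction. The one subtlety worth stating explicitly is why a single negation suffices: a many-one reduction $S_1 \mreduce S_2$ gives a Turing reduction $S_1 \treduce \overline{S_2}$ by one oracle call followed by complementation of the output, and here $\overline{\compositeness{\unc \cap \nonbip}} = \primality{\unc \cap \nonbip}$ because every graph produced by $f$ is nontrivial and hence falls under the prime/composite dichotomy with no trivial-graph edge case escaping it. Thus the complementation is valid on the entire image of the reduction, and the corollary follows.
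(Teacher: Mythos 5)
Your proposal is correct and matches the paper's own argument: the paper likewise derives the corollary by taking the reduction of Theorem~\ref{th:GIhard} and inverting (negating) the answer of the oracle ${\cal{A}}$, which turns the many-one reduction to \compositeness{$\unc \cap \nonbip$} into a single-query polynomial-time Turing reduction to \primality{$\unc \cap \nonbip$}. Your added remarks on nontriviality of the constructed graphs and on transitivity via $\gi{general\ graphs\hspace{-0.03cm}} \treduce \gi{connected\ graphs\hspace{-0.03cm}}$ simply make explicit details the paper leaves implicit.
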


\begin{proof}
The proof follows directly from the proof of Theorem~\ref{th:GIhard} by inverting the result
provided by the  \emph{oracle} ${\cal{A}}$.
\end{proof}
\section{Conclusions}\label{sec:conclusions}
In this paper we proved that primality testing of unconnected, nonbipartite  grahps with respect 
to direct product is at least as hard as deciding graph isomorphism. The same result also applies
to the computation of a prime factorization of a graph.
This result answer a long standing open question posed in~\cite{imrich1998} and shows
the crucial role played by connectedness in decomposing a graph.

It would be of some interest to investigate the reversed question, i.e.,  
whether deciding graph isomorphism is at least as hard as primality testing or not.
Another interesting  research direction is the study and the implementation
of efficient heuristics for computing a prime factorization or its approximation
of large, possibly unconnected and/or weighted graphs knowing that a polynomial time algorithm for computing such a prime factorization
is unlikely to exist.




\bibliographystyle{elsarticle-num}
\bibliography{manuscript}

\end{document}